\let\NAT@parse\undefined
\newtheorem{theorem}{Theorem}
\theoremstyle{definition}
\newtheorem{assumption}{Assumption}
\definecolor{matt}{rgb}{0.5 0 1}
\newcommand{\cA}{\mathcal{A}}
\newcommand{\cB}{\mathcal{B}}
\newcommand{\cC}{\mathcal{C}}
\newcommand{\cD}{\mathcal{D}}
\newcommand{\cP}{\mathcal{P}}
\newcommand{\cX}{\mathcal{X}}
\newcommand{\R}{\mathbb{R}}
\begin{document}

\title{Sequentially learning regions of attraction from data}

\author{Oumayma Khattabi, Matteo Tacchi-Bénard, Sorin Olaru}
\date{May 5, 2025}

\maketitle

\begin{abstract}
The paper is dedicated to data-driven analysis of dynamical systems. It deals with certifying the basin of attraction of a stable equilibrium for an unknown dynamical system. It is supposed that point-wise evaluation of the right-hand side of the ordinary differential equation governing the system is available for a set of points in the state space. Technically, a Piecewise Affine Lyapunov function will be constructed iteratively using an optimisation-based technique for the effective validation of the certificates. As a main contribution, whenever those certificates are violated locally, a refinement of the domain and the associated tessellation is produced, thus leading to an improvement in the description of the domain of attraction.

\textbf{Keywords :} \textit{Data-driven, Piecewise affine, Lyapunov function, Optimisation, Region of attraction, Iterative learning.}
\end{abstract}

\section{Introduction}
The ongoing rapid increase in data and computing capacity has led to significant advances in control theory, mainly with approaches exploiting data, gaining traction after important achievements in stabilising dynamical systems. Stability analysis follows two main currents: model-based and data-driven. While model-based methods (e.g. model-based predictive control (MPC), adaptive control, Lyapunov direct methods, etc.) have strong stability guarantees, data-driven control laws bypass the model identification step. They are directly building on to real system behaviour, a feature that promotes them as versatile alternatives
in stabilising complex systems. However, the critical issue of control stability certification arises, hindering their applicability.\par

Traditionally, one of the key frameworks used in stability analysis is Lyapunov stability theory \cite{lyapunov1992general}. It is based on finding a function/functional related to the dynamical system, called the Lyapunov candidate, that decreases along trajectories of the system and converges to zero as the system converges to a given attractor (e.g. an equilibrium point). Constructing an adequate Lyapunov function using analytical methods for systems with simple models is moderately easy, but challenging for systems with complex and non-linear models. Some alternative solutions have been proposed; for example, in \cite{julian_parametrization_1999}, the system function in question is represented by a piecewise affine (PWA) approximation for which a PWA Lyapunov candidate is computed using linear programming. In \cite{kim_estimation_2024}, on the other hand, a neural Lyapunov function is constructed to identify positively invariant sets using a PWA activation function. Neural Lyapunov functions are also used in \cite{wang_actor-critic_2024} to certify regions of attraction of the controlled system. There are other methods than constructing Lyapunov functions, like the interval-based method in~\cite{le_mezo_interval_2017} and the interval Lyapunov equation test of \cite{goldsztejn_estimating_2019}, which are also used for stability analysis of complex systems.

Many works in the literature propose data-based methods to prove the stability of the equilibrium point of the system. A prominent method involves the use of deep neural networks. This can be found in \cite{kolter2019learning} where a neural Lyapunov function is learnt along with the system's dynamics to ensure stability guarantees. Another example is \cite{min2023data}, where a neural network-based control law is used and is constrained with a Lyapunov function learnt online. In \cite{grune2020computing}, a neural Lyapunov candidate is constructed while trying to avoid the curse of dimensionality related to neural networks. Another notable stability analysis approach is the continuous piecewise affine (CPA) method by~\cite{hafstein2016computing}, where data points are taken on vertices of a space partition and are used to construct the Lyapunov function. Beyond the approaches above, \cite{martin_data-driven_2024} focuses on Taylor polynomials and utilises piecewise polynomial approximations based on noisy data to study system dissipativity. Other recent research includes \cite{zheng_data-driven_2024}, where a safe stabilising control law is generated along with its Lyapunov function for discrete systems using a noisy dataset.

Existing data-driven stability analysis approaches provide various frameworks for studying system behaviour. This paper will examine one such approach, presented in \cite{TacchiTAC25}, which seeks to find PWA Lyapunov functions for non-linear systems using data. The present contribution is organised as follows: After introducing the mathematical framework, the optimisation problem at the procedure's core is recalled, and as a contribution, an iterative implementation scheme is proposed. The advantage of such a parsimonious treatment of the Lyapunov candidate construction is that it provides patchy Lyapunov functions \cite{ancona2004stabilization} and thus allows for the certification of the domain of attraction in cases that are difficult to validate over the entirety of the region. We propose a technique that targets the uncertified area iteratively and identifies invariant sets based on the learnt Lyapunov functions. Finally, we present a successful resolution of the previous problematic case as a numerical application.

\vspace{0.4cm}
\noindent\textbf{Notation:}
$\R^+$ is defined as the set $\{x\in\R|x\geq0\}$.
For a topological space $\cX$, int$(\cX)$, $\overline{\cX}$, and $\partial\cX$ denote its interior, closure, and boundary, respectively.
A polyhedron is the intersection of a finite number of half-spaces.
A polyunion is a bounded, finite union of polyhedra (not necessarily convex).
The convex hull of a polyunion is denoted as conv(.).
For a set $S$, its cardinal is denoted as $|S|$. This notation is extended for a polyhedron $P$ so that $|P|$ is the number of vertices in $P$.
For $x \in \mathbb{R}^n$ and $r \in \mathbb{R}$, $\mathcal{B}(x,r)$ is used to denote a ball in $\mathbb{R}^n$ of center $x$ and radius $r$.
The Euclidean norm is denoted as $||.||$.
Let $f$ be a function from $\mathcal{X}\subset\mathbb{R}^n$ to $\mathbb{R}^n$, it is Lipschitz continuous, and $M$ is its Lipschitz constant, if for all $x,y \in \mathcal{X}, \ \ ||f(x)-f(y)|| \ \leq \ M||x-y||$.
The strict sub-level set of a function $V$ for a certain level $\alpha \in \mathbb{R}$ is defined as $\mathbf{L}_\alpha^V := \{ x \in \mathbb{R}^n | V(x) < \alpha \}$.
An $N_c$-piece tessellation $\{Y_c\}_{1 \leq c \leq N_c}$ of a subspace $\mathcal{X} \subset \mathbb{R}^n$ is defined as $\bigcup_{c=1}^{N_c} Y_c = \overline{\mathcal{X}}$, such that $\forall c,c' \in \{1,...,N_c\} \ , \ \ c \neq c' \Rightarrow \mathrm{int}(Y_c) \cap \mathrm{int}(Y_{c'}) = \varnothing$, with $Y_c$ convex polyunions which will be called simplexes. In the 2D case, the one-dimensional common facet between two simplices is called an edge, and a singular point of contact between edges is called a vertex. The set of vertices of the tessellation will be denoted $\{v_{i,c}\}^{1\leq i\leq|Y_c|}_{1\leq c\leq N_c}$.

\section{Problem Statement}
This section establishes the theoretical framework necessary for our stability analysis approach. First, we present the fundamental mathematical assumptions to lay the groundwork. Second, we reintroduce the optimisation problem given in \cite{TacchiTAC25} that constitutes the basis of our methodology. Last, we define the objectives and difficulties to tackle.

\subsection{Framework}
Our study addresses the challenge of analysing the domain of attraction of the system with unknown dynamics :
\begin{equation}\label{eq:system}
    \dot{x} = f(x)
\end{equation}
In order to set up a framework that allows a feasible solution, a series of hypotheses have to be imposed with respect to the family of dynamics and the region over which the analysis will be carried out.
\begin{assumption}\label{as:system}
  The system under study has a minimal state space realisation. For such a minimal state space realisation, the following hold:
  \begin{itemize}
      \item the dimension $n$ of the state vector is known;
      \item the right-hand side $f(.)$ of \eqref{eq:system} is accessible for evaluation either via measurements or through simulation;
      \item the function $f:\R^n\rightarrow \R^n$ is Lipschitz continuous, and an upper bound $M$ for its Lipschitz constant is known a priori.
  \end{itemize}
\end{assumption}

Analysing dynamical systems can reveal complex behaviours, including multiple equilibria and chaotic trajectories. To maintain focus within our study’s scope, we introduce additional constraints defining the solution space and the qualitative/quantitative structure of the solutions to~\eqref{eq:system}.

\begin{assumption}\label{as:set}
The analysis will be carried on on a bounded set $\mathcal{X}\subset\mathbb{R}^n$ of the state space. It is assumed that there is one locally asymptotically stable equilibrium $\bar{x} = 0 \in int(\mathcal{X})$, and no other equilibrium point exists in $\overline{\mathcal{X}}$.
\end{assumption}

Based on Assumption~\ref{as:system}, we can have access to a dataset $\mathcal{D}:=\{(x_d,f_d)\}_{1\leq d\leq N_d}$, with $f_d=f(x_d)$ and $N_d$ the number of data points, and we can choose the position of the state samples $x_d$ when generating data.

The aim is to be able to certify the stability of system \eqref{eq:system} in $\mathcal{X}$ based on the dataset $\mathcal{D}$ and the Lipschitz constant $M$.

\subsection{Data-driven Lyapunov stability analysis}
This work is based on the method presented by Tacchi et al. in \cite{TacchiTAC25}, in which we identify regions of attraction by constructing certified and robust Lyapunov functions consistent with the available data. The following Assumptions provide two additional technical conditions to avoid singularities in the numerical implementation:
\begin{assumption} \label{as:data}
The following information is available with respect to \eqref{as:system}:
\begin{itemize}
    \item a neighbourhood $\cA$ of the equilibrium point is known to be attracted to $\bar{x} = 0$ ;
    \item a fixed dataset $\cD$ of the system is available ;
    \item $\overline{\cX}$ and $\cA$ are polyunions ;
\end{itemize}
\end{assumption}

Let the region $\overline{\cX\setminus\cA}$ be subdivided into an $N_c$-piece tessellation $\{Y_c\}_{1 \leq c \leq N_c}$, with $N_c$ the number of cells $Y_c$. We set each cell $Y_c$ to be a convex polyunion with vertices $\{v_{1,c},\ldots,v_{|Y_c|,c}\}.$ To each cell $Y_c$, an affine Lyapunov function $V_c(x) := g_c^\top x + b_c$ is assigned. These functions are continuously merged together and constitute one global PWA Lyapunov candidate $V$ on $\mathcal{X}$ defined as follows :
\begin{equation}\label{eq:Lyap}
    \forall c \in \{1,...,N_c\}, \ \forall x \in Y_c \subset \overline{\cX\setminus\cA}, \ \ V(x) = g_c^\top x+b_c
\end{equation}
Based on these elements, and according to~\cite[Proposition 6]{TacchiTAC25}, if the upper bound on $\nabla V^\top f$ based on data is negative at all points in $\overline{\cX\setminus\cA}$ where $\nabla V$ exists, then for all $\alpha\in\R$ such that $\mathbf{L}_\alpha^V\subset\cX$, $\mathbf{L}_\alpha^V$ is include in the region of attraction.\par

The following optimisation problem is a formulation of the conditions to find a PWA Lyapunov function $V$ :
\begin{subequations}\label{eq:optprob}
    \begin{align}
        \mathbf{s}_{\alpha,\mu} := \min_{\{\gamma_{d,c},g_c,b_c,s_{i,c}\}} \ \sum_{c=1}^{N_c} \sum_{i=1}^{|Y_c|}& s_{i,c} \label{eq:obj} \\
        \forall \ c,c' \in \{{\scriptstyle 1,...,N_c}\}, \ i \in \{{\scriptstyle 1,...,|Y_c|}\},\ \ s_{i,c} &\geq -\mu \label{eq:cnd1} \\
        v_{i,c} \in Y_{c'} \ \Rightarrow \ (g_c-g_{c'})^\top v_{i,c} &= b_{c'}-b_c \label{eq:cnd2} \\
        \sum_{d=1}^{N_d} \gamma_{d,c} &= g_c \label{eq:cnd5} \\
        \sum_{d=1}^{N_d} \gamma_{d,c}^\top f_d + ||\gamma_{d,c}||M||v_{i,c}-x_d|| &\leq s_{i,c} \label{eq:cnd6}
    \end{align}
\end{subequations}
with $\{s_{i,c}\}^{1\leq i\leq|Y_c|}_{1\leq c\leq N_c}$ some slack variables to minimise (one per vertex of each cell of the tessellation), $\{\gamma_{d,c}\}^{1\leq d\leq N_d}_{1\leq c\leq N_c}$ data dependent variables that tune $\{g_c\}_{1\leq c\leq N_c}$ using constraints \eqref{eq:cnd5}, and $\mu\in\R^+$ defining the lower bound of the slack variables. The constraints \eqref{eq:cnd2} represent the continuity of the Lyapunov function $V$ between neighbouring cells $Y_c$ and $Y_{c'}$. The negativity of $\nabla V^\top f$ is ensured by constraint \eqref{eq:cnd6}. Once the function is obtained, a level $\alpha$ should be found such that $\mathbf{L}^V_\alpha \subset \cX$ with $\cA \subset \mathbf{L}^V_\alpha$, proving that $\mathbf{L}^V_\alpha$ is attracted to the equilibrium.
The learned function $V$ is a certified Lyapunov function if all slack variables $\{s_{i,c}\}^{1\leq i\leq|Y_c|}_{1\leq c\leq N_c}$ are negative. The negativity of $\nabla V^\top f$ is thus guaranteed, making the resulting $V$ a valid Lyapunov function.

To avoid the problem of infeasibility, \cite{TacchiTAC25} provides a necessary condition to be validated by the dataset :
\begin{equation}
    \{v_{i,c}\}^{1\leq i\leq|Y_c|}_{1\leq c\leq N_c} \subset \bigcup_{d=1}^{N_d} \cB\left(x_d,r_d=\frac{||f_d||}{M}\right)
    \label{eq:datacond}
\end{equation}
This implies that data should provide at least some minimal information relevant to each vertex of the tessellation.

\subsection{Advantages, challenges and objectives}
This method is a promising alternative to the state of the art, where the common practice is to use semi-definite programming (SDP)~\cite{martin_data-driven_2024,zheng_data-driven_2024}. In contrast, the optimisation problem~\eqref{eq:optprob} is based on second-order cone programming (SOCP) and linear constraints, which makes it more scalable to higher dimensions. Also, unlike neural networks that train on big datasets, it does not need ample datasets to give satisfying results. Moreover, there is no binding relation between the position of the data points and the tessellation except for the learnability condition~\eqref{eq:datacond}, which is quite flexible. This gives more freedom to adjust the dataset and vertices depending on the system at hand.

It should be mentioned that optimisation problem~\eqref{eq:optprob} gives satisfying results when applied to various nonlinear systems with nonconvex regions of attraction, like the Van Der Pol oscillator (see numerical results in~\cite{TacchiTAC25}). In some cases, constraints \eqref{eq:cnd1} are not saturated, but all slack variables are negative, which is enough to certify attractivity. However, there are exceptional cases where numerical computations provide (according to the chosen tessellation) a Lyapunov function certified in only part of the set considered initially. An example of one such case is the  damped pendulum with the following state space representation:
\begin{equation}
    \dot{x} = \left(\begin{array}{c} \dot{\theta} \\ \dot{\omega} \end{array}\right) = \left(\begin{array}{c} \omega \\ -\sin(\theta)-2\omega \end{array}\right)
    \label{eq:pendsys}
\end{equation}

For a randomly chosen Delaunay triangulation and dataset as displayed in Figure~\ref{sfig:pendata}, the stability is, on average, not certified by the algorithms of~\cite{TacchiTAC25} in the diagonal central part of the region (see figure \ref{sfig:pendlyap}), even though it is attracted to the equilibrium. The expected result was a Lyapunov function covering the entirety of the tessellation. However, the optimisation problem could not provide negative slack variables and thus valid affine functions in some of the simplexes. These simplexes get eliminated as they don't validate the Lyapunov function criteria, which leaves a considerable area of the initial set uncovered and thus uncertified by the PWA Lyapunov candidate. This result is hypothesised to be caused by the drastic variation of the values of $\{f_d\}_{1\leq d\leq Nd}$ (represented by orange arrows in~\ref{sfig:pendata}), where data points with larger image values (big arrows in the corners) exert significant influence on the optimisation process but fail to sufficiently contribute information on the images of the vertices in areas where $f_d$ takes on smaller absolute values. Consequently, these areas with lower function absolute values are under-represented and inaccurately characterised, eventually affecting the overall accuracy of the optimisation.

\begin{figure}[htbp]
    \centering
    \begin{subfigure}{0.45\linewidth}
        \includegraphics[width=\linewidth]{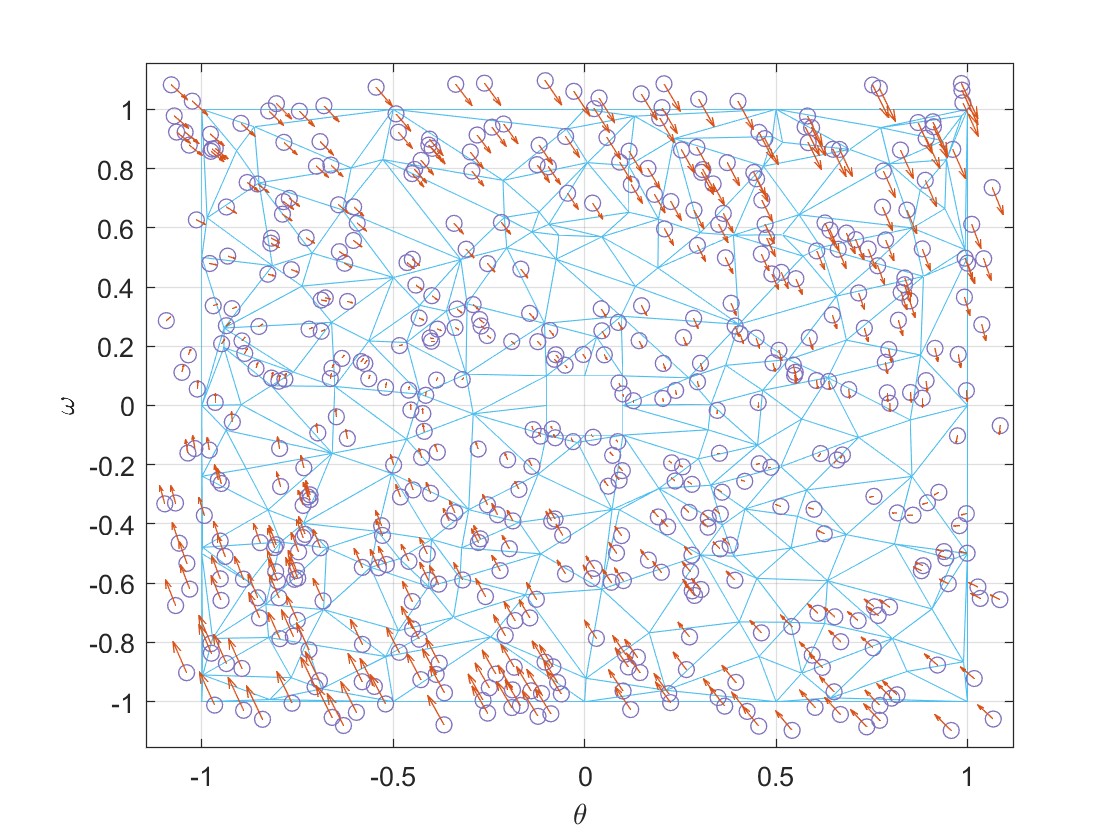}
        \caption{\scriptsize{Data and tessellation}}
        \label{sfig:pendata}
    \end{subfigure}
    \hspace{0.05\linewidth}
    \begin{subfigure}{0.45\linewidth}
        \includegraphics[width=\linewidth]{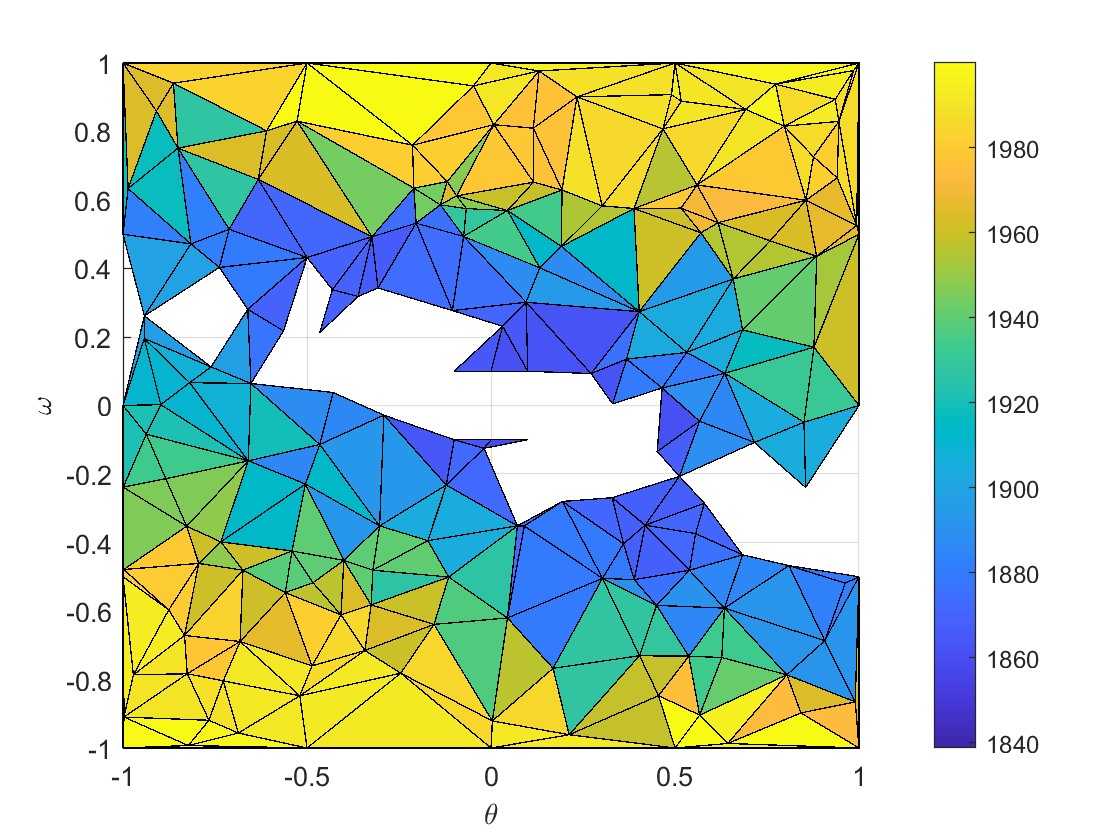}
        \caption{\scriptsize{Result of the optimisation}}
        \label{sfig:pendlyap}
    \end{subfigure}
    \caption{Example case of the simple pendulum}
    \label{fig:pend}
\end{figure}

According to~\cite[Appendix B]{TacchiTAC25}, a dataset and a tessellation providing enough covering of the set $\overline{\cX\setminus\cA}$ are bound to bring forth a certified Lyapunov function. However, there is no sufficient condition for the density and distribution of the data and vertices to allow the algorithm's convergence to optimality. Moreover, the complexity of the computations explodes as the number of constraints is of the order of $\mathcal{O}((3|Y_c|+1)\cdot N_c)$ and the number of optimisation variables is around $\mathcal{O}((|Y_c|+2N_d+3)\cdot N_c)$, which makes it difficult to boundlessly increase $N_d$ and $N_c$ until optimality is reached.

The objective of this work is to solve the non-certifiability of the Lyapunov function in parts of the region and to overcome the complexity issue to ensure that solving this optimisation problem can effectively establish the stability of the equilibrium point of the system in $\mathcal{X}$, using the example of the pendulum of system \eqref{eq:pendsys} as validation.

\section{Proposed Approach}
To overcome the challenges encountered when implementing this stability analysis method and demonstrate the practical applicability of the mathematical theory, we propose in this section an iterative approach using the optimisation problem, where we iteratively adapt the data and tessellation to deploy to get a Lyapunov function certifying the largest region of attraction to the equilibrium point in the set $\cX$.

\subsection{The principle of the iterative construction}
Since one of the core issues lies in the uncertainty regarding the existence of a PWA function solution for the optimisation problem with all slacks negative, a first sanity check is needed to ascertain that a PWA Lyapunov candidate exists. Note that this sanity check uses the full knowledge of the system model~\eqref{eq:pendsys}, while usually, the dynamics $f$ are unknown. However, our iterative construction still applies in the unknown setting, provided one can refine their tessellation and data distribution.

For this, we consider a quadratic Lyapunov function $V_Q$ of the system \eqref{eq:pendsys}:
\begin{equation}
    \forall x \in \R^2 \ \text{ s.t. } \ ||x||_\infty\leq1 \ , \quad V_Q(x) = \frac{1}{2}x^TQx
\end{equation}
with $Q = \left(\begin{array}{cc}
    10 & 3 \\
    3 & 4
\end{array}\right)$.

Next, a PWA approximation $\hat{V}_Q$ of $V_Q$ is computed.  If $\hat{V}_Q$ is a Lyapunov candidate, then we have proven the existence of a PWA Lyapunov function and a tessellation/subdivision $\{Y_c\}_{1 \leq c \leq N_c}$ corresponding to the affine pieces of $\hat{V}_Q$.

Since we know the model and a PWA Lyapunov function here, we can use the corresponding subdivision as our initial tessellation for faster convergence. In the general case, when the dynamics are unknown, using a Delaunay triangulation also works \cite{delaunay1934sphere}, but it only takes more iterations to converge. Next, we generate data points that offer a covering of the tessellation so that learnability condition~\eqref{eq:datacond} is met. Then, we solve optimisation problem~\eqref{eq:optprob} for the constructed inputs, obtain a Lyapunov candidate $V$, and look for non-negative slack variables. If none exist, then the largest $\alpha$ with $\cA \subset \mathbf{L}^V_\alpha \subset \cX$ defines a region attracted to $0$ along system trajectories. Else, we identify the uncertified points $$\mathcal{N} = \{ (i,c) \in \mathbb{N}^2 \mid 1 \leq c \leq N_c, 1 \leq i \leq |Y_c|, s_{i,c} \geq 0 \}$$
from which we deduce a convex uncertified region: 
\begin{equation} \label{eq:uncertif}
\cC = \mathrm{conv}(\cA \cup \{v_{i,c}\}_{(i,c) \in \mathcal{N}}).
\end{equation}
By design of constraint~\eqref{eq:cnd6} (see~\cite[Appendix A]{TacchiTAC25} for details), we can certify that $\forall x \in \cX \setminus \cC$, $\nabla V(x)^\top f(x) < 0$. The problem is that, contrary to the setting in~\cite[Appendix A]{TacchiTAC25}, the case $\cC \nsubseteq \cA$ can occur, preventing us from concluding anything of significance on whether $\cC$ is part of the region of attraction to the equilibrium. In that case, one can study the stability in $\cC$ by iterating as follows:
\begin{itemize}
    \item 
    suppose a first iteration $k$ failed to compute $V_k$, $\alpha_k$ and $\cC_k$ based on $\cX_k$, $\cA_k$ and $\cD_k$ such that $\cC_k \nsubseteq \cA$. Define $\cX_{k+1} = \beta_{k+1}\cdot \cC_k$, where $\beta_{k+1} \geq 1$ is the highest upscaling parameter with $\cX_{k+1} \subset \mathbf{L}^{V_k}_{\alpha_k}$ still verified; if no such $\beta_{k+1} \geq 1$ exists, it means that the uncertified area intersects or is outside of the level set limit where $V_k = \alpha_k$, meaning this particular iteration failed and should be restarted with a finer tessellation and more data points until a better and valid result is generated.
    \item Introduce a smaller $\cA_{k+1} \subset \cA_k \subset \cC_k$ ($\cA_0 = \cA$ ensures all such sets are attracted to the equilibrium, and non-negative slacks are often in the neighbourhood of the exclusion zone $\cA_k$, so taking a smaller $\cA_{k+1}$ can validate the condition $\cC_{k+1}$ in $\cA$).
    \item Compute a tessellation of $\cP = \cX_{k+1} \setminus \cA_{k+1}$ by adding vertices to make it denser, and refine the dataset $\cD_k$ into some $\cD_{k+1}$ accordingly by removing some data points and adding others while maintaining learnability condition~\eqref{eq:datacond}.
    \item Solve optimisation problem~\eqref{eq:optprob} for the new input to obtain a new $V_{k+1}$. Find $\alpha_{k+1}$ such that $\cC_k \subset \mathbf{L}^{V_{k+1}}_{\alpha_{k+1}} \subset \cX_{k+1}$; if no $\alpha_{k+1}$ such that $\cC_k \subset \mathbf{L}^{V_{k+1}}_{\alpha_{k+1}}$ exists, the new Lyapunov candidate fails to cover all uncertified points and the iteration fails. 
    This iteration needs to be done again in this case with refined tessellation and richer data. 
    Otherwise, define $\cC_{k+1}$ and proceed.
    \item If $\cC_k \subset \cA$ or $k = k_{\max}$ (max number of iterations reached), stop iterating; else, start again from the first step with $k \longleftarrow k+1$.
\end{itemize}

The iterative method consists of repeating all the steps, from choosing the parameters, data and tessellation to testing if the uncertified area is contained in $\cA$, until this last condition is verified. It works following Algorithm~\ref{alg}.
\begin{algorithm}
    \caption{}\label{alg}
    \textbf{Input:} \textit{$\mathcal{X}$, $\mathcal{A}$, $\mu$, $M$}\\
    \textbf{Output:} \textit{$\{\mathbf{L}^{V_k}_{\alpha_k}\}$}\\
    $\cP\gets\cX\setminus\cA$\\
    \While{number of iterations $k$ $<$ $k_{\max}$}
    {Generate tessellation $\{Y_c\}_{1 \leq c \leq N_{c_k}}$ with $\cP=\cup_{c=1}^{N_{c_k}} Y_c$\\
    Generate dataset $\cD_k$ verifying condition \eqref{eq:datacond}\\
    Run optimisation problem \eqref{eq:optprob} and obtain $V_k$\\
    \eIf{all slack variables are negative}
    {Break}
    {Identify the non-certified polyunion $\cC_k$\\
    Identify the corresponding level set $\mathbf{L}^{V_k}_{\alpha_k}$ of $V_k$\\
    \uIf{$\cC_{k-1}\nsubseteq\mathbf{L}^{V_k}_{\alpha_k}$}{Iteration failed, restart with a more refined tessellation and dataset}
    \uElseIf{$\cC_k\subseteq\cA_k$}
    {Break}
    \uElseIf{$\beta_{k+1}$ exists such that $\beta_{k+1}\cdot \cC_k \subset \mathbf{L}^{V_k}_{\alpha_k}$}{Choose $\cA_{k+1}\subset\cA_k$\\
    Accept $\beta_{k+1}$ and $\cX_{k+1} = \beta_{k+1}\cdot \cC_k$\\
    $\cP\gets \cX_{k+1}\setminus\cA_{k+1}$}
    }}
\end{algorithm}

Algorithm~\ref{alg} comes with a certification result, which we state in Theorem~\ref{thm}.

\begin{theorem}\label{thm}
    Let $\cX = \cX_0 \supset \cX_1 \supset \ldots \supset \cX_K$, $\cA = \cA_0 \supset \cA_1 \supset \ldots \supset \cA_K$, $\cD = \cD_0, \ldots, \cD_K$ defining ambient sets, prior attracted sets and datasets of instances of problem~\eqref{eq:optprob}. Let $V_0, \ldots, V_K$ be the patchy PWA Lyapunov candidates associated to feasible solutions of the corresponding problem instances, with uncertified regions $\cC_0, \ldots, \cC_K$ and levels $\alpha_0, \ldots, \alpha_K$ (in particular, $\forall k \in \{0,\ldots,K\}$, $\mathbf{L}^{V_k}_{\alpha_k} \subset \cX_k$). Suppose that:
    \begin{itemize}
        \item $\forall k \in \{1,\ldots,K\}$, $\cC_{k-1} \subset \mathbf{L}^{V_k}_{\alpha_k} \subset \cX_k \subset \mathbf{L}^{V_{k-1}}_{\alpha_{k-1}}$ 
        \item $\cC_K = \cA_K$ (no non-negative slack variable) or $\cC_K \subset \cA_0$
    \end{itemize}
    Then, $\mathbf{L}^{V_0}_{\alpha_0}$ is attracted to $0$ along the unknown dynamics, i.e.
    \begin{equation} \label{eq:thm}
        V_0(x(0)) \leq \alpha_0 \quad \Longrightarrow \quad x(t) \underset{t\to\infty}{\longrightarrow} 0.
    \end{equation}
\end{theorem}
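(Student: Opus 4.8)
The strategy is to show that the hypotheses force every solution of~\eqref{eq:system} issued from $\mathbf{L}^{V_0}_{\alpha_0}$ to be \emph{funnelled}, in finitely many stages, through the nested open sublevel sets $\mathbf{L}^{V_0}_{\alpha_0} \supset \mathbf{L}^{V_1}_{\alpha_1} \supset \ldots \supset \mathbf{L}^{V_K}_{\alpha_K}$ and ultimately into $\cA_0$; since $\cA_0 = \cA$ is attracted to $0$ by Assumption~\ref{as:data}, this yields~\eqref{eq:thm}. The engine of each stage is the per-level certification recalled around~\eqref{eq:uncertif}: because each $V_k$ is a feasible patchy candidate, constraint~\eqref{eq:cnd6} together with~\cite[Appendix~A]{TacchiTAC25} guarantees $\nabla V_k(x)^\top f(x) < 0$ at every $x \in \cX_k \setminus \cC_k$ where $\nabla V_k$ exists, so by~\cite[Proposition~6]{TacchiTAC25} the locally Lipschitz, piecewise-affine function $V_k$ decreases strictly along every trajectory arc that stays in $\cX_k \setminus \cC_k$.

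First I would prove a \textbf{descent lemma}: for $k \in \{0,\ldots,K-1\}$, if $x(\tau) \in \mathbf{L}^{V_k}_{\alpha_k}$ for some $\tau$, then $x(t)$ stays in $\mathbf{L}^{V_k}_{\alpha_k}$ for all $t \ge \tau$ and reaches $\mathbf{L}^{V_{k+1}}_{\alpha_{k+1}}$ in finite time. Reading the hypothesis at index $k+1$ gives $\cC_k \subset \mathbf{L}^{V_{k+1}}_{\alpha_{k+1}} \subset \cX_{k+1} \subset \mathbf{L}^{V_k}_{\alpha_k} \subset \cX_k$, so the shell $\mathbf{L}^{V_k}_{\alpha_k} \setminus \mathbf{L}^{V_{k+1}}_{\alpha_{k+1}}$ lies inside the certified region $\cX_k \setminus \cC_k$. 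Hence, while the trajectory is in this shell, $V_k(x(t))$ is non-increasing, so it never reaches the value $\alpha_k$ and remains in the compact set $\{x : V_k(x) \le V_k(x(\tau))\} \subset \mathbf{L}^{V_k}_{\alpha_k}$ (global existence being free because $f$ is globally Lipschitz); in particular it never leaves $\mathbf{L}^{V_k}_{\alpha_k}$. If it never entered $\mathbf{L}^{V_{k+1}}_{\alpha_{k+1}}$, it would stay in that compact subset of $\cX_k \setminus \cC_k$ forever, so its $\omega$-limit set $\Omega$ would be nonempty, compact, invariant, contained in $\cX_k$, and contained in a level set of $V_k$; but no point of $\Omega$ can lie in $\cX_k \setminus \cC_k$, since the complete orbit through such a point stays in $\Omega$ and, remaining momentarily in $\cX_k \setminus \cC_k$, would be forced to leave that level set --- hence $\Omega \subset \cC_k$, the distance from $x(t)$ to $\cC_k$ tends to $0$, and since the compact $\cC_k$ lies in the \emph{open} set $\mathbf{L}^{V_{k+1}}_{\alpha_{k+1}}$ the trajectory eventually enters it, a contradiction.

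Iterating the descent lemma from $k=0$ up to $k=K-1$, the trajectory enters $\mathbf{L}^{V_1}_{\alpha_1}$, then $\mathbf{L}^{V_2}_{\alpha_2}$, \ldots, then $\mathbf{L}^{V_K}_{\alpha_K}$, each in finite time and without ever leaving $\mathbf{L}^{V_0}_{\alpha_0}$ (each $\mathbf{L}^{V_{j+1}}_{\alpha_{j+1}} \subset \cX_{j+1} \subset \mathbf{L}^{V_j}_{\alpha_j}$). It then remains to see that $\mathbf{L}^{V_K}_{\alpha_K}$ is contained in the region of attraction of $0$. Under the second hypothesis there are two cases: either no slack is non-negative, so $V_K$ is certified on all of $\cX_K \setminus \cA_K$ and~\cite[Proposition~6]{TacchiTAC25} applies directly; or $\cC_K \subset \cA_0$, in which case $V_K$ is certified on $\cX_K \setminus \cC_K$ and the residual set $\cC_K$ is itself attracted to $0$ (being a subset of the attracted set $\cA_0$), so the patchy Lyapunov argument of~\cite[Appendix~A]{TacchiTAC25} applies with attracted set $\cA_0 \supset \cC_K$ (note $\cA_K \subset \cA_0$, so this covers both alternatives). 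Either way, $\mathbf{L}^{V_K}_{\alpha_K}$ lies in the region of attraction, which closes the chain. Finally, the boundary case $V_0(x(0)) = \alpha_0$ reduces to the strict one: $\cC_0$ being strictly interior to $\mathbf{L}^{V_0}_{\alpha_0}$, the surface $\{V_0 = \alpha_0\}$ lies in $\cX_0 \setminus \cC_0$ where $\nabla V_0^\top f < 0$, so the trajectory immediately enters $\mathbf{L}^{V_0}_{\alpha_0}$.

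The step I expect to be the main obstacle is the \textbf{nonsmooth Lyapunov bookkeeping}: upgrading the pointwise inequality $\nabla V_k^\top f < 0$ (valid only where $\nabla V_k$ exists) to a genuine strict decrease of $t \mapsto V_k(x(t))$ along trajectories that may slide along the facets of the tessellation, and doing so compatibly with the invariance and $\omega$-limit reasoning above. This requires the Clarke subdifferential calculus behind~\cite[Proposition~6]{TacchiTAC25}, together with the fact that the certification says \emph{nothing} on $\partial\cC_k$ --- which is exactly why the argument is routed through $\omega$-limit sets and the openness of the target $\mathbf{L}^{V_{k+1}}_{\alpha_{k+1}}$ rather than through a crude ``decrease rate uniformly bounded away from zero'' estimate. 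A secondary, more routine difficulty is the topological care needed to check $\cC_k \subset \cX_k$, to keep the trajectory inside the successive ambient sets, and to handle the boundaries of the polyunions $\cX_k$ and $\cA_k$.
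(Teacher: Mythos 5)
Your proposal is correct in outline, but it takes a genuinely different route from the paper. The paper argues by backward induction on $k$, using \cite[Proposition 6]{TacchiTAC25} as a black box at every stage: the base case is the innermost level, where $\cC_K \subset \cA_0$ lets Proposition 6 (with prior attracted set $\cA_0$) certify that $\mathbf{L}^{V_K}_{\alpha_K}$ is attracted; the inductive step then observes that once $\mathbf{L}^{V_k}_{\alpha_k}$ is known to be attracted, it can itself play the role of the prior attracted set in the problem instance at level $k-1$, since $\cC_{k-1} \subset \mathbf{L}^{V_k}_{\alpha_k}$ means $\nabla V_{k-1}^\top f < 0$ is certified on $\cX_{k-1} \setminus \mathbf{L}^{V_k}_{\alpha_k}$, so Proposition 6 immediately gives attraction of $\mathbf{L}^{V_{k-1}}_{\alpha_{k-1}}$. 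You instead work forward: you invoke the proposition only at the innermost level (with attracted set $\cA_0$, exactly as the paper does there), and for the intermediate levels you re-prove the funnelling by hand, via invariance of the certified shell $\mathbf{L}^{V_k}_{\alpha_k} \setminus \mathbf{L}^{V_{k+1}}_{\alpha_{k+1}}$ and an $\omega$-limit/LaSalle argument; this is sound, since the shell is indeed contained in $\cX_k \setminus \cC_k$ by the hypothesis at index $k+1$, and compactness of $\cC_k$ inside the open target set yields the finite-time entry. What your route buys is self-containedness: explicit finite-time passage through each shell, an explicit treatment of the boundary case $V_0(x(0)) = \alpha_0$ (which the paper glosses over), and no need to re-apply Proposition 6 with a computed sublevel set standing in for the prior attracted neighbourhood. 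The price is that you must redo, inside the proof, the nonsmooth bookkeeping (strict decrease of the piecewise affine $V_k$ along trajectories that may slide on facets) that Proposition 6 already encapsulates; you correctly flag this and defer it to the machinery of \cite[Appendix A]{TacchiTAC25}, which is legitimate, but it means your descent lemma essentially unpacks the result the paper simply cites. One minor wording slip, inconsequential to the argument: the trajectory is confined to $\{x : V_k(x) \le V_k(x(\tau))\}$ only until it first enters $\mathbf{L}^{V_{k+1}}_{\alpha_{k+1}}$ (inside the uncertified region $V_k$ may increase again, though it stays below $\alpha_k$), but since your $\omega$-limit argument runs under the assumption that the trajectory never enters $\mathbf{L}^{V_{k+1}}_{\alpha_{k+1}}$, the conclusion is unaffected.
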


\begin{proof}
    Suppose that $K = 0$. Then, $\cC_0 = \cA_0$ i.e. all slack variables in problem~\eqref{eq:optprob} are negative, and ~\cite[Theorem 8]{TacchiTAC25} ensures that~\eqref{eq:thm} holds.
    
    Suppose that $V_k(x(0)) \leq \alpha_k \Longrightarrow \lim_{t\to\infty}x(t) = 0$. Then, since $\cC_{k-1} \subset \mathbf{L}^{V_k}_{\alpha_k} \subset \cX_{k-1}$ and $V_{k-1}$ is a feasible solution for problem~\eqref{eq:optprob},
    $\nabla V_{k-1}(x)^\top f(x) < 0$ is certified for all $x \in \cX_{k-1} \setminus \mathbf{L}^{V_k}_{\alpha_k}$ and according to~\cite[Proposition 6]{TacchiTAC25}, it holds $V_{k-1}(x(0)) \leq \alpha_{k-1} \Longrightarrow \lim_{t\to\infty} x(t) = 0$.

    By induction, since $\mathbf{L}^{V_0}_{\alpha_0} \supset \mathbf{L}^{V_1}_{\alpha_1} \supset \ldots \supset \mathbf{L}^{V_K}_{\alpha_K}$, it is therefore enough to prove that 
    $$V_K(x(0)) \leq \alpha_K \Longrightarrow \lim_{t\to\infty}x(t) = 0,$$
    which is done by noticing that, since $\cC_K \subset \cA_0$ (because $\cC_K \subseteq \cA_k \subset \cA_0$), $\nabla V_K(x)^\top f(x) < 0$ is certified for all $x \in \cX_K \cap (\cX_0 \setminus \cA_0)$ and by reusing~\cite[Proposition 6]{TacchiTAC25}.
\end{proof}

\subsection{Implementation and practical challenges}
In \cite{TacchiTAC25}, the last algorithm proposed (Algorithm 3) bears some similarity to our proposed Algorithm~\ref{alg} as it is also an iterative approach, but they are fundamentally different. In \cite[Algorithm 3]{TacchiTAC25}, the iterations start from the smallest region to the largest to build one continuous Lyapunov function covering the union of the areas. It also stops if it cannot generate a certified Lyapunov function in the region. Our algorithm, on the other hand, starts from the global region and certifies a region of attraction towards an inner region. If the inner region is not included in $\cA$, it repeats the same step while covering smaller and smaller areas in each iteration. The goal is not to build a global Lyapunov function, but to prove the region's attraction iteratively by constructing patchy Lyapunov functions in the region such that each function defines an invariant set based on its sublevel set.

To provide greater flexibility in the optimisation problem and ensure the validation of continuity constraints, all the tessellations are taken so that the cells are triangular. As for the dataset, considering that the region that proved difficult for the algorithm has data that can't be informative enough for all the vertices, we include data from neighbouring areas outside of the tessellation until condition \eqref{eq:datacond} is satisfied. Some data points are also taken on the vertices for even better coverage.

\section{Numerical Results}
In this part, we present the implementation of the iterative method of the optimisation problem described in Algorithm~\ref{alg} on the damped pendulum with the mathematical model~\eqref{eq:pendsys}. The tessellation used is ensured to contain a PWA Lyapunov function. We seek to retrieve the attractivity of the origin 
as the equilibrium in the chosen region using Lyapunov candidates computed from data measurements, ignoring the full model~\eqref{eq:pendsys}. It should be mentioned that even if the next simulation results shown are informed of the model solely in establishing the tessellation, a simulation that respects the assumption of the unavailability of the model at all times is also successful, albeit it needs more iterations. The results are shown for comparison in Figure~\ref {fig:levelsets}.

For the first implementation, we use the parameters described in Table \ref{tab:init_params} such that the initial set $\mathcal{X}$ is bounded by $l_\mathcal{X}$ and the neighbourhood $\mathcal{A}$ of the equilibrium point is bounded by $l_\mathcal{A}$.

\begin{table}[htbp]
    \caption{Initial optimization parameters}
    \centering
    \begin{tabular}{|c|c|}
        \hline
        \textbf{Parameter} & \textbf{Value} \\
        \hline
        $l_\mathcal{X}$ & $[-1,1;-1,1]$ \\
        $l_\mathcal{A}$ & $0.1.l_\mathcal{X}$ \\
        $\mu$ & 5\\
        $M$ & $2.5$ \\
        \hline
    \end{tabular}
    \label{tab:init_params}
\end{table}

The algorithm was \href{https://github.com/OumaymaK/Sequential_data-driven_stability_analysis}{coded} using YALMIP and MPT3 toolboxes \cite{Lofberg2004, MPT3}, was run on a laptop with 32GB memory and AMD Ryzen 7 7735U, and the solver used was MOSEK. The problem was solved in five iterations. Figures \ref{fig:iter1} to \ref{fig:iter5} show the results of each iteration.

\begin{figure}[htbp!]
    \centering
    \begin{subfigure}{.42\linewidth}
        \includegraphics[width=\linewidth]{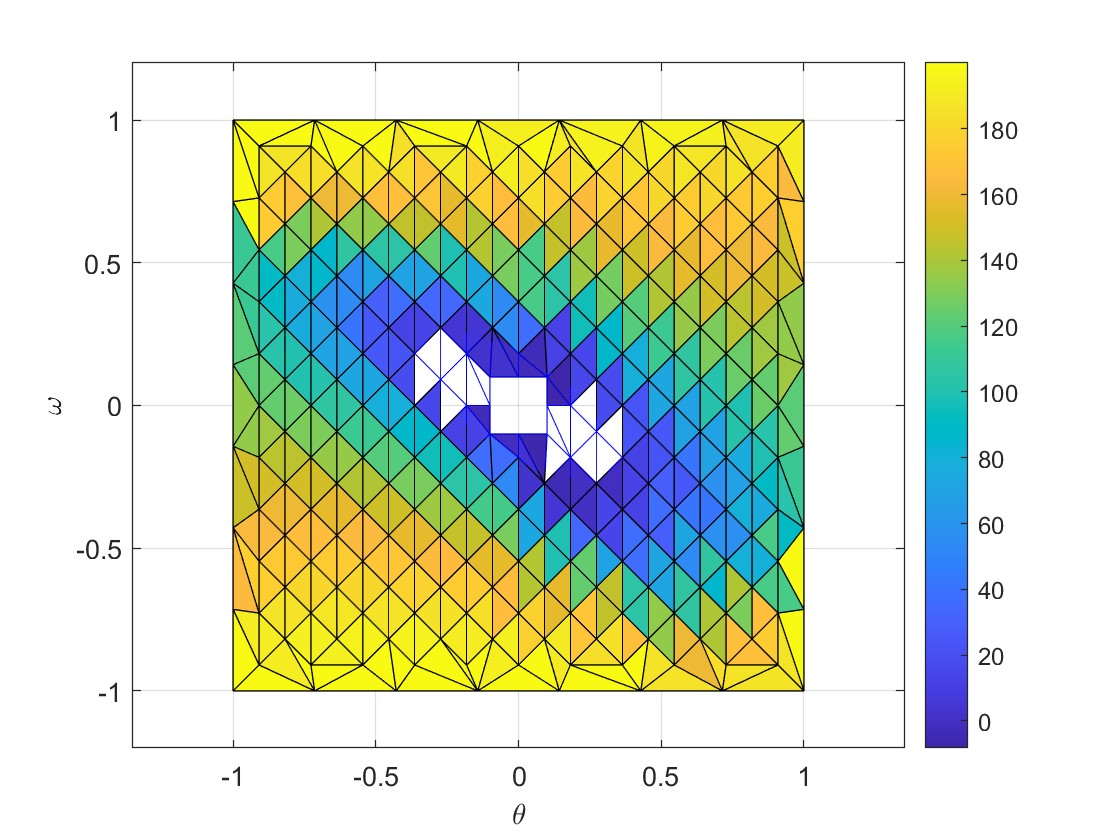}
        \caption{\tiny $N_{d_k}=300$, $N_{v_k}=252$}
    \end{subfigure}
    \begin{subfigure}{.42\linewidth}
        \includegraphics[width=\linewidth]{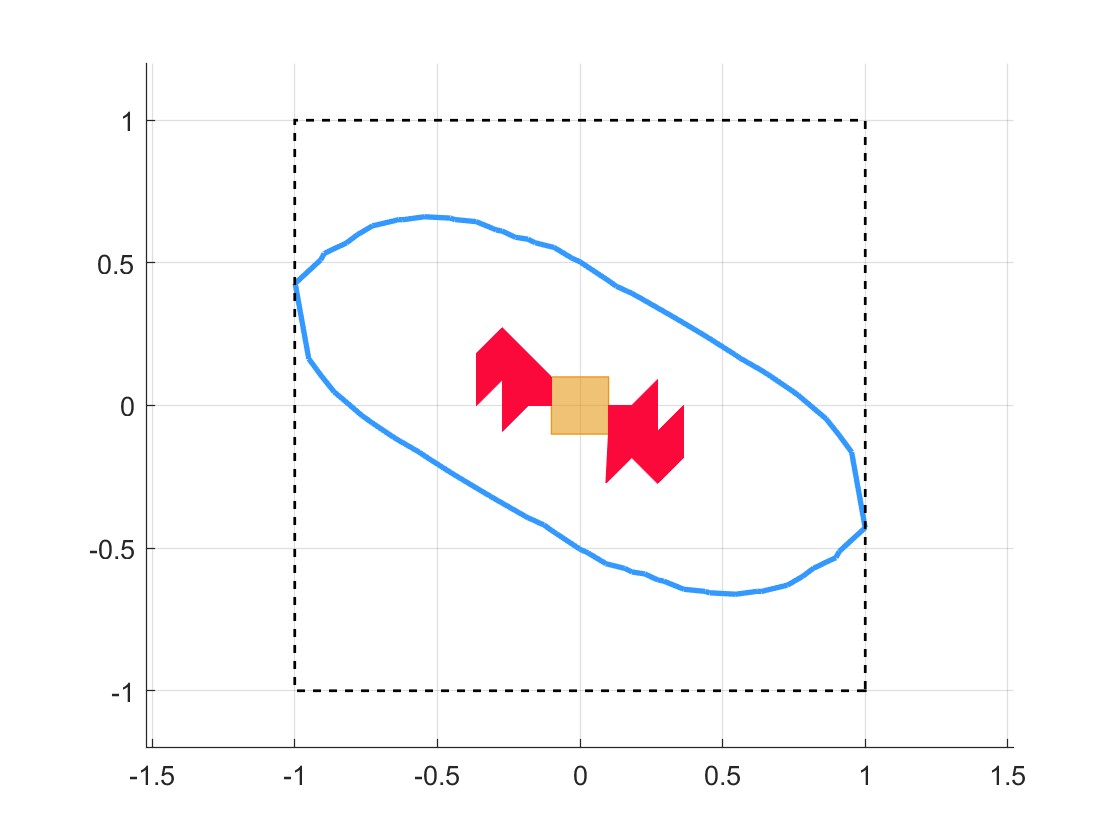}
        \caption{\tiny $\alpha_k = 111$}
    \end{subfigure}
    \caption{$k=0$}
    \label{fig:iter1}
\end{figure}

\begin{figure}[htbp!]
    \centering
    \begin{subfigure}{.42\linewidth}
        \includegraphics[width=\linewidth]{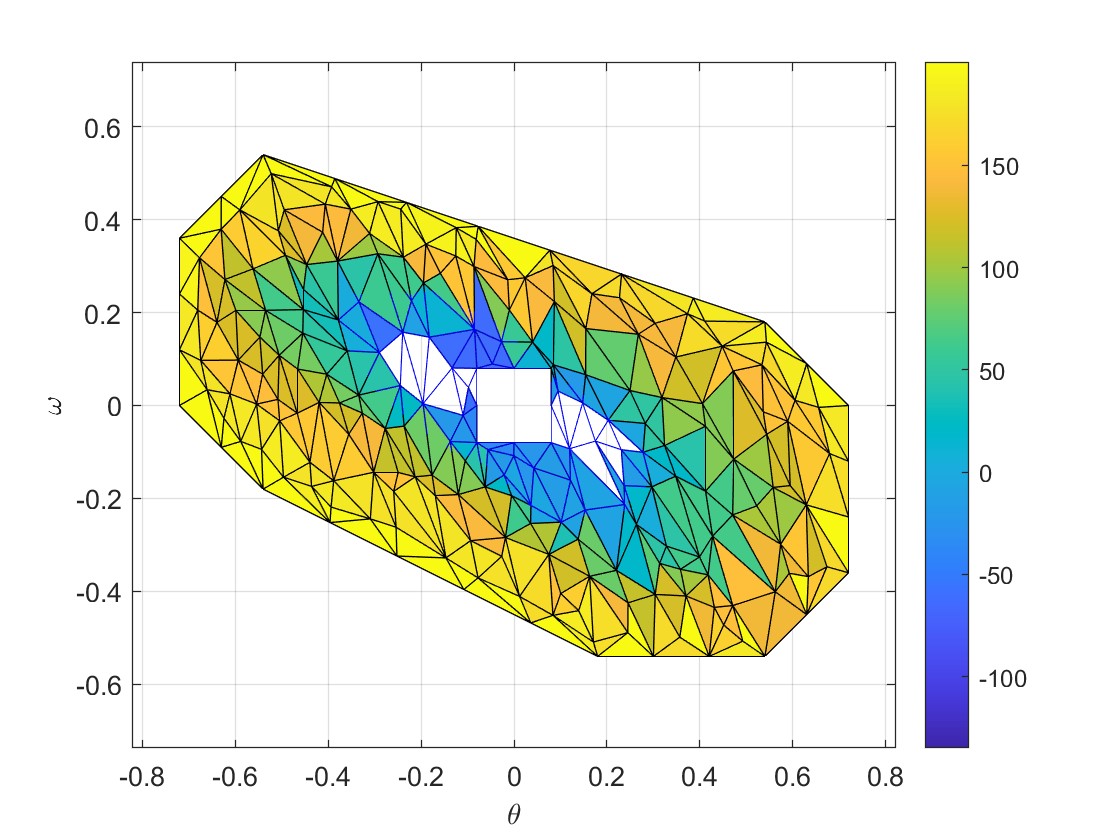}
        \caption{\tiny $N_{d_k}=349$, $N_{v_k}=221$}
    \end{subfigure}
    \begin{subfigure}{.42\linewidth}
        \includegraphics[width=\linewidth]{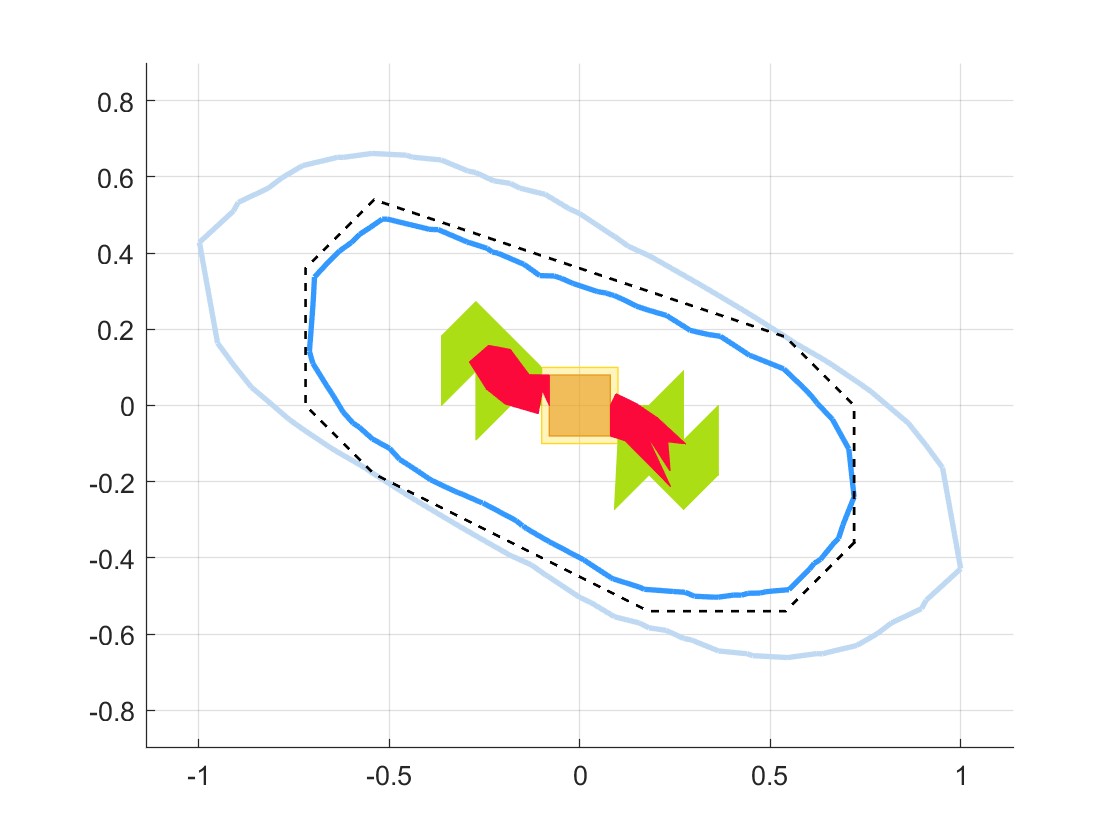}
        \caption{\tiny $\alpha_k = 174$, $\beta_k=1.98$}
    \end{subfigure}
    \caption{$k=1$}
    \label{fig:iter2}
\end{figure}

\begin{figure}[htbp!]
    \centering
    \begin{subfigure}{.42\linewidth}
        \includegraphics[width=\linewidth]{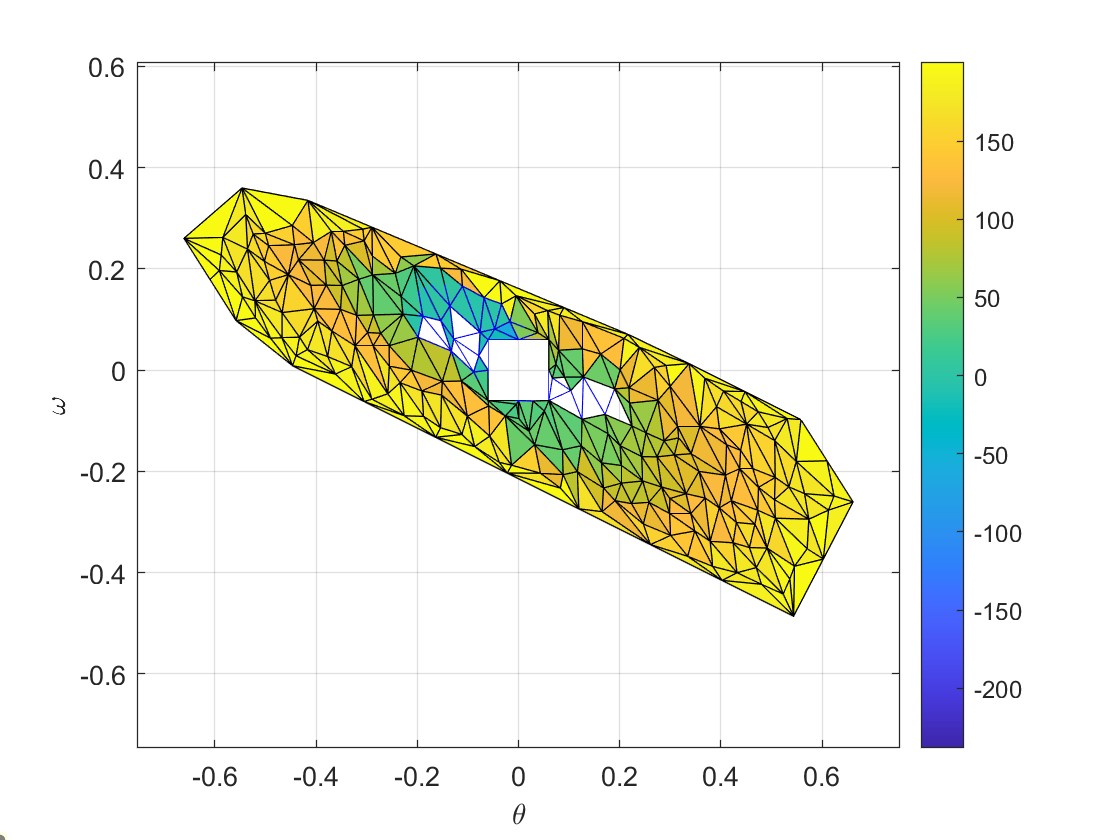}
        \caption{\tiny $N_{d_k}=346$, $N_{v_k}=238$}
    \end{subfigure}
    \begin{subfigure}{.42\linewidth}
        \includegraphics[width=\linewidth]{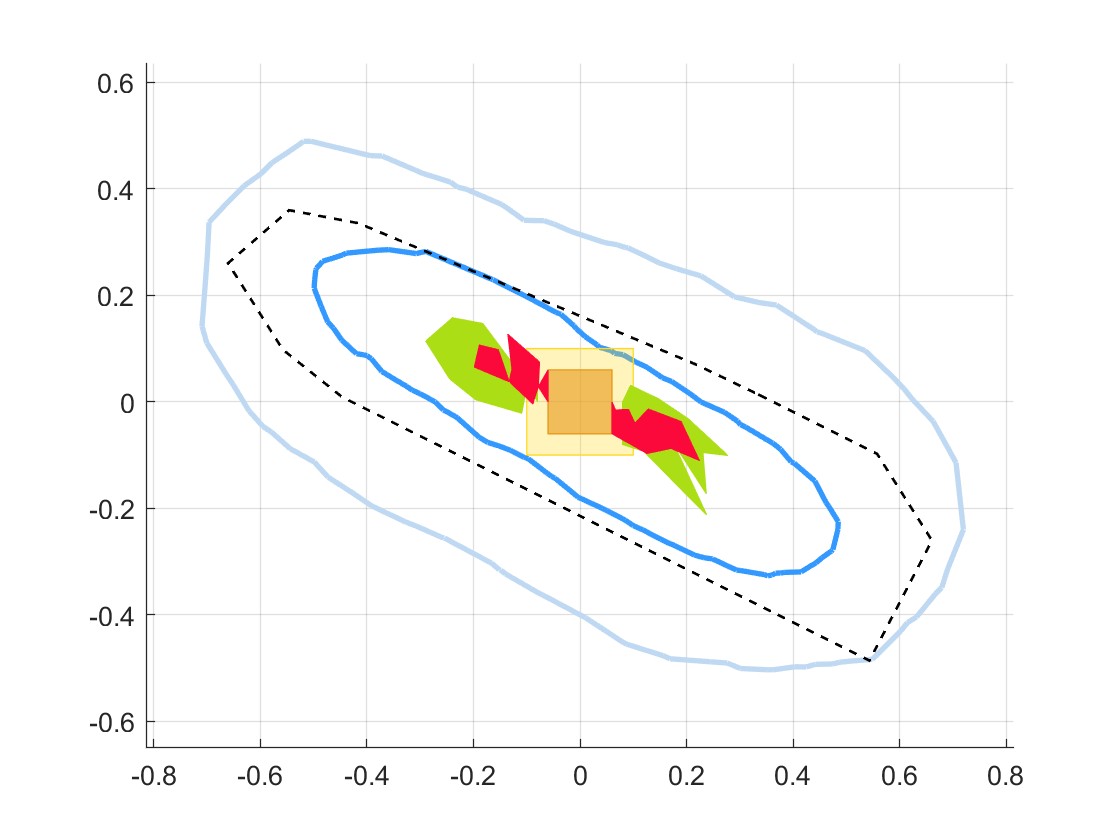}
        \caption{\tiny $\alpha_k = 146$, $\beta_k=2.26$}
    \end{subfigure}
    \caption{$k=2$}
    \label{fig:iter3}
\end{figure}

\begin{figure}[htbp!]
    \centering
    \begin{subfigure}{.42\linewidth}
        \includegraphics[width=\linewidth]{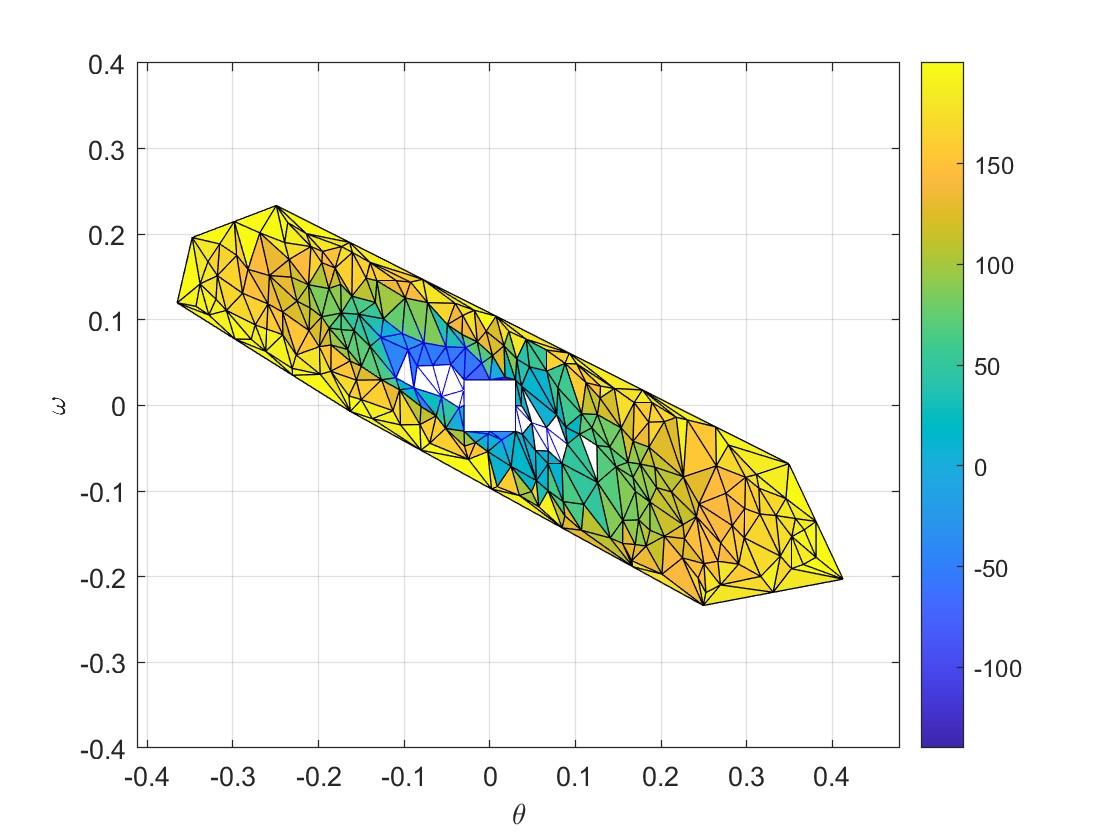}
        \caption{\tiny $N_{d_k}=348$, $N_{v_k}=241$}
    \end{subfigure}
    \begin{subfigure}{.42\linewidth}
        \includegraphics[width=\linewidth]{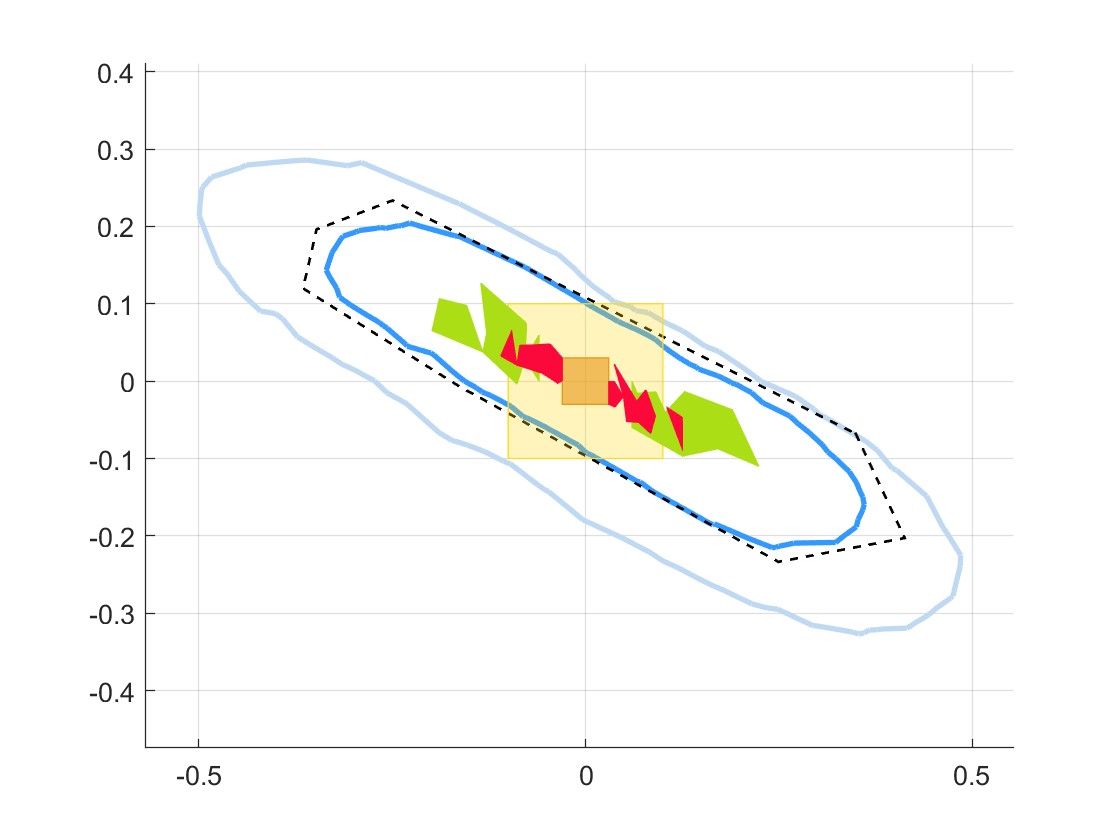}
        \caption{\tiny $\alpha_k = 180$, $\beta_k=1.8$}
    \end{subfigure}
    \caption{$k=3$}
    \label{fig:iter4}
\end{figure}

\begin{figure}[htbp!]
    \centering
    \begin{subfigure}{.42\linewidth}
        \includegraphics[width=\linewidth]{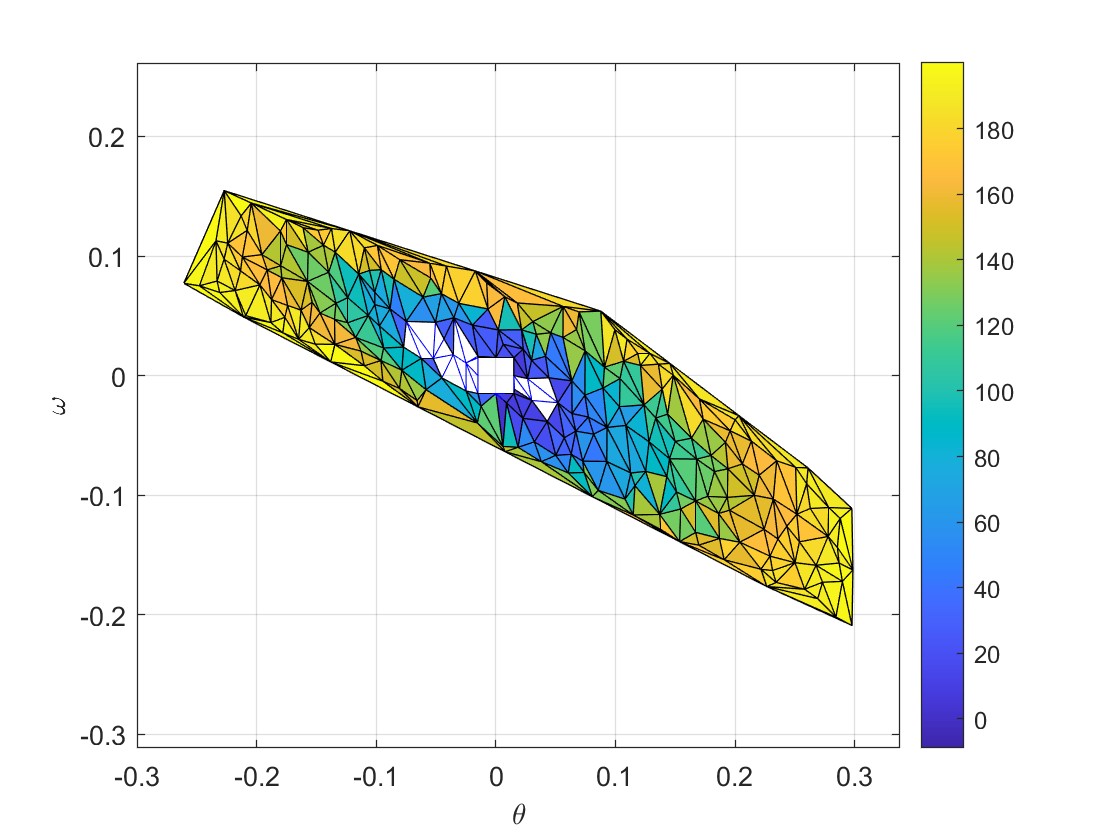}
        \caption{\tiny $N_{d_k}=370$, $N_{v_k}=239$}
    \end{subfigure}
    \begin{subfigure}{.42\linewidth}
        \includegraphics[width=\linewidth]{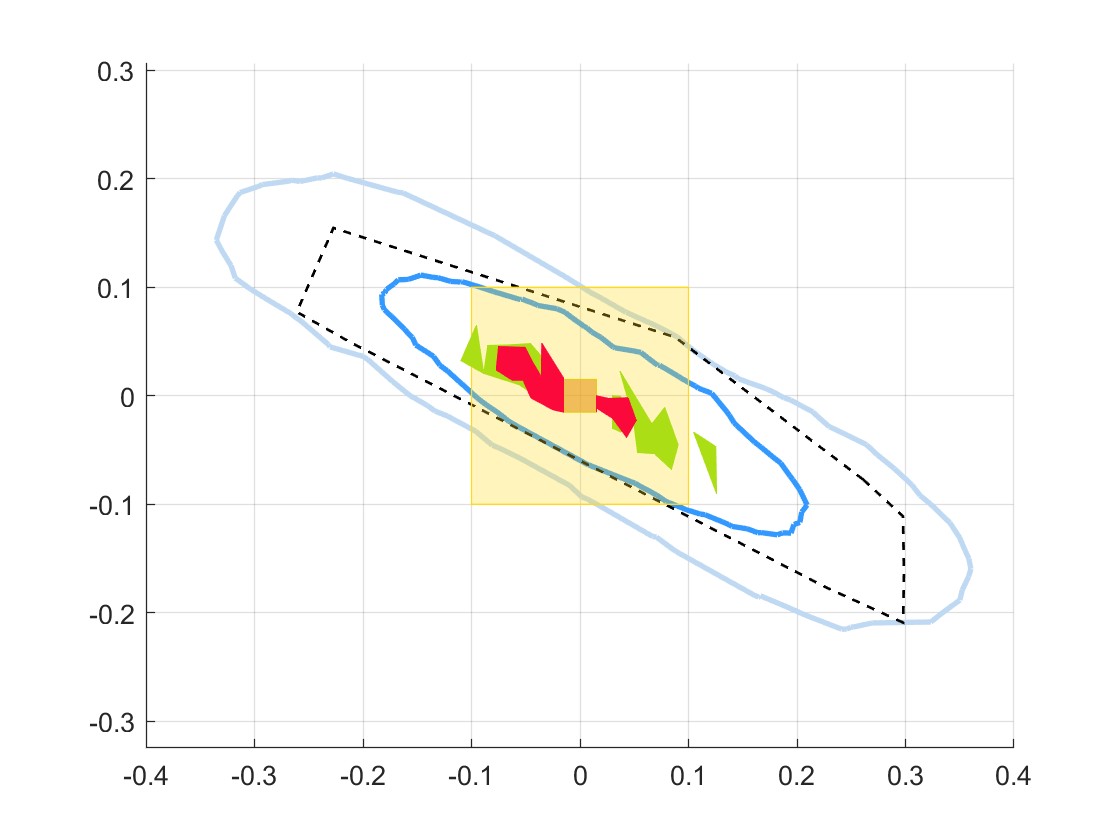}
        \caption{\tiny $\alpha_k = 145$, $\beta_k=2.32$}
    \end{subfigure}
    \caption{$k=4$}
    \label{fig:iter5}
\end{figure}

Each figure represents the results of an iteration $k$. The left sub-figure represents the Lyapunov function $V_k$ solution of the optimisation problem. The right sub-figure represents the following:
\begin{itemize}
    \item in a dashed line the borders of $\cX_k$;
    \item in a blue line the patchy region of attraction (delimited by the border of the level set $\mathbf{L}^{V_k}_{\alpha_k}$);
    \item in a faded blue line the border of the previous level set $\mathbf{L}^{V_{k-1}}_{\alpha_{k-1}}$;
    \item in red, the area $\cC_k$ that still needs to be certified;
    \item in green the previous uncertified area $\cC_{k-1}$;
    \item in orange the set $\cA_k$;
    \item and in transparent yellow the set $\cA=\cA_0$.
\end{itemize}

One can notice that the new level set $\mathbf{L}^{V_k}_{\alpha_k}$ inside $\cX_k$ in each iteration $k$ includes the region $\cC_{k-1}$ not certified by the Lyapunov candidate in the previous step, which is in line with the first condition of Theorem~\ref{thm}: $\cC_{k-1} \subset \mathbf{L}^{V_k}_{\alpha_k} \subset \cX_k \subset \mathbf{L}^{V_{k-1}}_{\alpha_{k-1}}$. One can also see that $\cC_k$ covers less area than $\cC_{k-1}$, hinting at algorithm convergence. $\cC_4$ in particular is small enough to validate the stopping criterion, which is the inclusion in $\cA$, in line with the validation of the second condition of Theorem~\ref{thm}: $\cC_K \subset \cA=\cA_0$. The validation of the two conditions implies that the level set $\mathbf{L}^{V_0}_{\alpha_0}$ is a region of attraction.

Figure~\ref{fig:levelsets} shows the progression of the level sets from figures~\ref{fig:iter1} to~\ref{fig:iter5} over the iterations (from green to blue). All trajectories starting in an invariant set converge to the subset inside it. As for the last invariant set, the trajectories converge to $\cA$ and then to the equilibrium point. In grey dotted lines, we can see the level sets obtained from a completely data-driven simulation. It converged to $\cA$ successfully, albeit in one more iteration compared to the informed simulation above. This proves that it is possible to certify the attraction of the equilibrium point in the region without relying on the model, and that the choice of tessellation strongly influences the speed of convergence.

\begin{figure}[htbp]
    \centering
    \includegraphics[width=.9\linewidth]{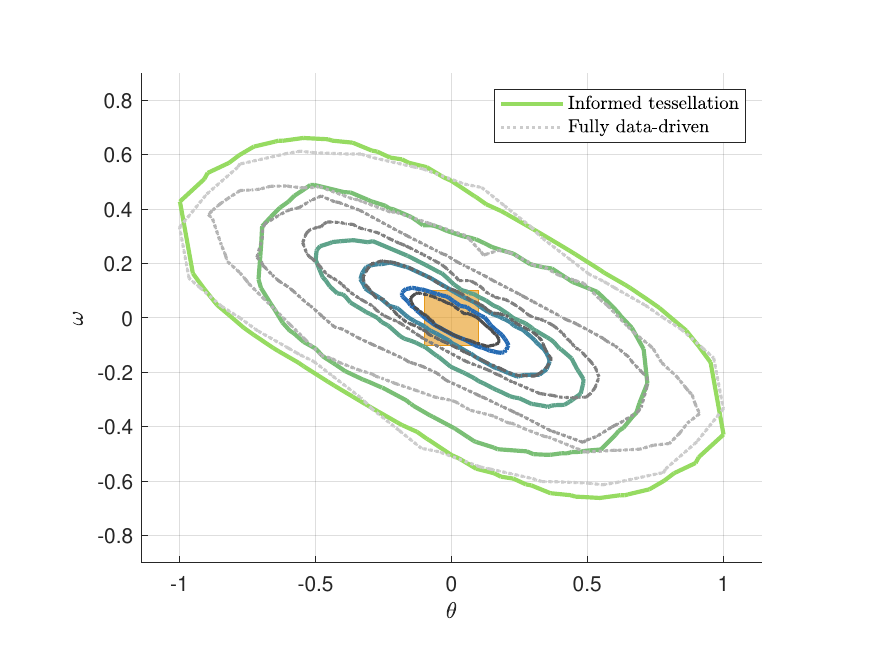}
    \caption{Graphical description of the relationship between the level sets' boundaries obtained iteratively. The progression in terms of set inclusions illustrates the convergence of the proposed algorithm.}
    \label{fig:levelsets}
\end{figure}

\section{Conclusion}
The paper concentrates on the data-driven stability certification. It employs an optimisation-based technique to build a PWA Lyapunov function and, as a contribution, proves that iterative treatment of the regions to be analysed can provide an increased complexity of the tessellations in the areas where the data are informative on the dynamical evolution. Practically, the resulting certificates of convergence rely on patchy Lyapunov functions and the construction of regions of attraction is ordered in the sense of inclusion. Several directions remain open for future research. Considering data's influence on the results, it would be interesting to consider some system trajectories as input instead of just independent data points. An essential next step is considering discrete and controlled systems to broaden the approach's applicability.

\bibliographystyle{ieeetr}

\begin{thebibliography}{10}

\bibitem{lyapunov1992general}
A.~M. Lyapunov, ``The general problem of the stability of motion,'' {\em International Journal of Control}, vol.~55, pp.~531--534, Mar 1992.

\bibitem{julian_parametrization_1999}
P.~Julian, J.~Guivant, and A.~Desages, ``A parametrization of piecewise linear {Lyapunov} functions via linear programming,'' {\em International Journal of Control}, vol.~72, pp.~702--715, Jan 1999.

\bibitem{kim_estimation_2024}
D.~Kim and H.~J. Kim, ``Estimation of constraint admissible invariant set with neural {Lyapunov} function,'' Sept. 2024.
\newblock arXiv:2409.19881 [eess].

\bibitem{wang_actor-critic_2024}
J.~Wang and M.~Fazlyab, ``Actor-critic physics-informed neural {Lyapunov} control,'' {\em IEEE Control Systems Letters}, vol.~8, pp.~1751--1756, 2024.
\newblock arXiv:2403.08448 [cs].

\bibitem{le_mezo_interval_2017}
T.~Le~Mézo, L.~Jaulin, and B.~Zerr, ``An interval approach to compute invariant sets,'' {\em IEEE Transactions on Automatic Control}, vol.~62, pp.~4236--4242, Aug. 2017.

\bibitem{goldsztejn_estimating_2019}
A.~Goldsztejn and G.~Chabert, ``Estimating the robust domain of attraction for non-smooth systems using an interval {Lyapunov} equation,'' {\em Automatica}, vol.~100, pp.~371--377, Feb. 2019.

\bibitem{kolter2019learning}
J.~Z. Kolter and G.~Manek, ``Learning stable deep dynamics models,'' {\em Advances in neural information processing systems}, vol.~32, 2019.

\bibitem{min2023data}
Y.~Min, S.~M. Richards, and N.~Azizan, ``Data-driven control with inherent lyapunov stability,'' in {\em 2023 62nd IEEE Conference on Decision and Control (CDC)}, pp.~6032--6037, IEEE, 2023.

\bibitem{grune2020computing}
L.~Gr{\"u}ne, ``Computing {L}yapunov functions using deep neural networks,'' {\em arXiv preprint arXiv:2005.08965}, 2020.

\bibitem{hafstein2016computing}
S.~F. Hafstein, C.~M. Kellett, and H.~Li, ``Computing continuous and piecewise affine {L}yapunov functions for nonlinear systems,'' {\em Journal of Computational Dynamics}, vol.~2, no.~2, pp.~227--246, 2016.

\bibitem{martin_data-driven_2024}
T.~Martin and F.~Allgöwer, ``Data-driven system analysis of nonlinear systems using polynomial approximation,'' {\em IEEE Transactions on Automatic Control}, vol.~69, pp.~4261--4274, Jul 2024.
\newblock Conference Name: IEEE Transactions on Automatic Control.

\bibitem{zheng_data-driven_2024}
J.~Zheng, J.~Miller, and M.~Sznaier, ``Data-{Driven} {Safe} {Control} of {Discrete}-{Time} {Non}-{Linear} {Systems},'' {\em IEEE Control Systems Letters}, vol.~8, pp.~1553--1558, 2024.
\newblock Conference Name: IEEE Control Systems Letters.

\bibitem{TacchiTAC25}
M.~Tacchi, Y.~Lian, and C.~Jones, ``Robustly learning regions of attraction from fixed data,'' {\em IEEE Transactions on Automatic Control}, vol.~70, no.~3, pp.~1576--1591, 2025.

\bibitem{ancona2004stabilization}
F.~Ancona and A.~Bressan, ``Stabilization by patchy feedbacks and robustness properties,'' in {\em Optimal control, stabilization and nonsmooth analysis}, pp.~185--199, Springer, 2004.

\bibitem{delaunay1934sphere}
B.~Delaunay, ``Sur la sphère vide. {À} la mémoire de {G}eorges {V}oronoï,'' {\em Proceedings of the Russian Academy of Sciences. Mathematical Series}, no.~6, pp.~793--800, 1934.

\bibitem{Lofberg2004}
J.~L{\"{o}}fberg, ``Yalmip : A toolbox for modeling and optimization in matlab,'' in {\em In Proceedings of the CACSD Conference}, (Taipei, Taiwan), 2004.

\bibitem{MPT3}
M.~Herceg, M.~Kvasnica, C.~Jones, and M.~Morari, ``{Multi-parametric toolbox 3.0},'' in {\em Proc.~of the European Control Conference}, (Z{\"u}rich, Switzerland), pp.~502--510, Jul 17--19 2013.
\newblock \url{http://control.ee.ethz.ch/~mpt}.

\end{thebibliography}

\end{document}